\theoremstyle{definition}
\newtheorem{prop}{Proposition}
\begin{document}
\title[]
{Pricing variance swaps in a hybrid model of\\ stochastic volatility
	and interest rate\\ with regime-switching}
 
  \author{Jiling Cao}
  \address{(Jiling Cao): School of Engineering, Computer and
  Mathematical Sciences, Auckland University of Technology,
  Private Bag 92006, Auckland 1142, New Zealand}
  \email{jiling.cao@aut.ac.nz}
  
   \author{Teh Raihana Nazirah Roslan}
   \address{(Teh Raihana Nazirah Roslan): School of Engineering, Computer and
   	Mathematical Sciences, Auckland University of Technology,
   	Private Bag 92006, Auckland 1142, New Zealand\\ and
   	School of Quantitative Sciences, College of Arts
   	\& Sciences, Universiti Utara Malaysia, 06010 Sintok, Kedah,
   	Malaysia}
   \email{raihana.roslan@aut.ac.nz}
   
   \author{Wenjun Zhang}
   \address{(Wenjun Zhang): School of Engineering, Computer and
   	Mathematical Sciences, Auckland University of Technology,
   	Private Bag 92006, Auckland 1142, New Zealand}
   \email{wenjun.zhang@aut.ac.nz}

  %\thanks{}

  \keywords{Heston-CIR hybrid model, Regime-switching, Realized
  variance, Stochastic interest rate, Stochastic volatility,
  Variance swap.}%insert keywords
  \subjclass[2000]{Primary 91G30; Secondary 91G20, 91B70.}%insert subject class

  %\date{\today}
  \begin{abstract}
  In this paper, we consider the problem of pricing discretely-sampled
  variance swaps based on a hybrid model of stochastic volatility and
  stochastic interest rate with regime-switching. Our modeling framework
  extends the Heston stochastic volatility model by including the
  CIR stochastic interest rate and model parameters
  that switch according to a continuous-time observable Markov chain
  process. A semi-closed form pricing formula for variance swaps is
  derived. The pricing formula is assessed through numerical
  implementations, and the impact of including regime-switching on
  pricing variance swaps is also discussed.
  \end{abstract}

  \maketitle

   \section{Introduction}
  
   A \emph{variance swap} is a forward contract on the future realized
   variance of returns of a specified asset. At maturity time $T>0$,
   the variance swap rate can be evaluated as $V(T)=(RV-K)\times L$,
   where $K$ is the annualized delivery or strike price for the
   swap, $RV$ is the realized variance of the swap and $L$ is the 
   notional amount of the swap in dollars. A typical formula for 
   measuring $RV$ is
   \begin{equation}\label{vswaps}
   	RV=\frac{AF}{N}\sum_{j=1}^N \left(\frac{S(t_{j})-S(t_{j-1})}
   	{S(t_{j-1})} \right)^2 \times 100^2,
   \end{equation}
   where $S(t_{j})$ is the closing price of the underlying asset at
   the \emph{j}-th observation time $t_{j}$ and $N$ is the number
   of observations. The annualized factor $AF$ follows the sampling
   frequency to convert the above evaluation to annualized variance
   points. Assuming there are $252$ business days in a year, then
   $AF$ is equal to 252 for daily sampling frequency. However, if
   the sampling frequency is monthly or weekly, then $AF$ will be
   $12$ or $52$, respectively. The measure of realized variance
   requires monitoring the underlying price path discretely, usually
   at the end of a business day. For this purpose, we assume equally
   discrete observations to be compatible with the real market, which
   reduces to $AF=\frac{1}{\Delta t}=\frac{N}{T}$. The long position
   of variance swaps pays a fixed delivery price $K$ at the expiration
   and receives the floating amounts of annualized realized variance,
   whereas the short position is the opposite.
   
   \medskip
   Since variance swaps were first launched in 1998, the problem of
   how to price them has been an active research topic in mathematical
   and quantitative finance. Carr and Madan \cite{carr1998ttv}
   combined static replication using options with dynamic trading
   in the futures to price and hedge variance swaps without specifying
   the volatility process. Demeterfi et al.~\cite{demeterfi1999mty}
   worked in the same direction by proving that a variance swap
   could be reproduced via a portfolio of standard options. A
   finite-difference method via dimension-reduction approach was
   explored in \cite{little-pant:01} to obtain high efficiency and
   accuracy for pricing discretely-sampled variance swaps. In 
   \cite{zhu-lian:11, zhu-lian:12}, Zhu and Lian extended the work 
   in \cite{little-pant:01} by incorporating Heston two-factor 
   stochastic volatility for pricing discretely-sampled variance 
   swaps. However, a simpler approach was explored in 
   \cite{Rujivan-Zhu:12}, where the Schwarts solution procedure was 
   applied to derive an affine solution of PDEs. Recently, to extend 
   the work in \cite{zhu-lian:11} where stochastic interest rates 
   were ignored, Cao et al. \cite{Cao-Lian-Roslan} employed 
   a hybridization of the stochastic volatility model and the CIR 
   interest rate model to investigate the pricing rates of variance
   swaps with discrete sampling. In \cite{Elliott2007pvs}, Elliott
   et al. proposed a continuous-time Markovian-regulated version 
   of the Heston stochastic volatility model to distinguish different 
   states of a business cycle. An analytical formula for pricing 
   volatility swaps was obtained using the regime-switching 
   Esscher transform and comparisons were made between models with 
   and without switching regimes. The essence of incorporating 
   regime-switching for pricing variance swaps under the Heston 
   stochastic volatility model was illustrated in 
   \cite{Elliott2012pva, Elliott2007pvs}, where a common 
   assumption is ``continuous sampling time". In fact, options of 
   discretely sampled variance swaps were misvalued when the 
   continuous sampling was used as an approximation, and large 
   inaccuracies occurred in certain sampling periods, as discussed 
   in \cite{Bernard-Cui:14, Elliott2012pva, little-pant:01, 
   	zhu-lian:12}.
   
   \medskip
   In the past decade, many researchers have considered to integrate 
   Markovian regime-switching techniques with stochastic interest 
   rate models. For example, in order to incorporate jumps and 
   inconsistencies between different business stages, Elliott et al. 
   \cite{Elliott2007pvs} and Siu \cite{Siu:10} used the 
   regime-switching approach to extend the Cox-Ingersoll-Ross (CIR), 
   the Hull-White and the Vasicek models respectively. However,
   there exists a gap in the literature regarding pricing volatility 
   derivatives under stochastic volatility and stochastic interest 
   rates with regime-switching. As far as we know, the only 
   existing study was the one conducted in \cite{shen2013pvs}, 
   which focused only on continuous sampling variance swaps and 
   employed the PDE approach. In this paper, we address the issue 
   of pricing discretely-sampled variance swaps under stochastic 
   volatility and stochastic interest rate with regime-switching. 
   We extend the framework of both \cite{Cao-Lian-Roslan} and
   \cite{Elliott2012pva} by incorporating the CIR stochastic 
   interest rate into the Markov-modulated version of the Heston 
   stochastic volatility model. This hybrid model possesses 
   parameters that switch according to a continuous-time 
   observable Markov chain process which can be interpreted as 
   the states of an observable macroeconomic factor. Our approach
   is different from that of \cite{shen2013pvs}. Instead of the
   continuous sampling approach, we use the discrete sampling 
   approach to improve accuracy in pricing and computational 
   efficiency.
   
   \medskip
   The rest of this paper is organized as follows. In Section 2, a
   detailed description of regime-switching hybrid model is first
   provided, followed by derivation of the dynamics for the model
   under the \emph{T}-forward measure. In Section 3, we derive the
   forward characteristic function in order to obtain the
   semi-analytical formula for the price of variance swaps. In
   Section 4, some numerical examples are given, demonstrating
   the accuracy of our solution and impacts of regime-switching.
   In Section 5, a brief summary and comparisons of our results
   with other relevant results in the literature are provided.
   
   %%%%%%%%%%%%%%%%%%%%%%%%%%%%%%%%%%%%%%%%%%%%%%%%%%%%%%%%%%%%%%%%%%
   \section{Modelling framework}
   %%%%%%%%%%%%%%%%%%%%%%%%%%%%%%%%%%%%%%%%%%%%%%%%%%%%%%%%%%%%%%%%%%
   
   In this section, we develop a hybrid model which combines the Heston
   stochastic volatility model with the one-factor CIR stochastic
   interest rate dynamics including regime-switching effects.
   A regime-switching model for pricing volatility derivatives
   was first considered by Elliot et al. \cite{Elliott2007pvs}. Recently,
   Elliot and Lian \cite{Elliott2012pva} considered regime-switching
   effects on the Heston's stochastic volatility model. Our aim is to
   extend the work in \cite{Elliott2012pva} by incorporating stochastic
   interest rate into the modeling framework.
   
   \subsection{The Heston-CIR model with regime-switching}
   
   Let $\{S(t): 0\le t \le T\}$ be the process of certain asset price
   over a finite time horizon $[0,T]$. The Heston-CIR hybrid model is
   described by
   \begin{equation}
   \left \{\label{Eq2.1}
   \begin{array}{ll}
   dS(t)=\mu S(t)dt+\sqrt{\nu(t)}S(t)dW_{1}(t), \quad 0\le t \le T,
   \\[0.5em]
   d\nu(t)=\kappa(\theta-\nu(t))dt+\sigma\sqrt{\nu(t)}dW_{2}(t),
   \quad 0\le t \le T,\\[0.5em]
   dr(t)=\alpha(\beta-r(t))dt +\eta \sqrt{r(t)} dW_{3}(t), \quad
   0\le t \le T,
   \end{array}
   \right.
   \end{equation}
   where $\{\nu(t): 0\le t \le T\}$ is the stochastic instantaneous
   variance process and $\{r(t): 0 \le t \le T\}$ is the process of
   stochastic instantaneous interest rate. The
   parameter $\kappa$ determines the mean-reverting speed of $\nu(t)$,
   $\theta$ is its long-term mean and $\sigma$ is its volatility.
   Similarly, $\alpha$ determines the speed of mean reversion for
   the interest rate process, $\beta$ is the interest rate term
   structure and $\eta$ controls the volatility of the interest
   rate. As mentioned in \cite{cox1985tts,hestoncfs1993}, to
   ensure that the square root processes are always
   positive, it is required that $2\kappa \theta \geq \sigma^2$ and
   $2\alpha\beta \geq \eta^2$ respectively. Here, we
   assume that correlations involved in the above model are given by
   $(dW_{1}(t), dW_{2}(t))=\rho dt$, $(dW_{1}(t), dW_{3}(t))=0$ and
   $ (dW_{2}(t), dW_{3}(t))=0$, where $\rho$ is a constant with
   $-1\le \rho \le 1$. By the Girsanov theorem, there exists a
   risk-neutral measure $\mathbb Q$ equivalent to
   the real world measure $\mathbb P$ such that under
   $\mathbb Q$ system (\ref{Eq2.1}) is transformed into the form of
   \begin{equation}
   \left \{\label{Eq2.2}
   \begin{array}{ll}
   dS(t)=r(t)S(t)dt+\sqrt{\nu(t)}S(t)d\widetilde{W}_{1}(t),
   \quad 0\le t \le T,\\[0.5em]
   d\nu(t)=\kappa^*
   (\theta^*-\nu(t))dt+\sigma\sqrt{\nu(t)}d\widetilde{W}_{2}(t),
   \quad 0\le t \le T,\\[0.5em]
   dr(t)=\alpha^*(\beta^*-r(t))dt +\eta \sqrt{r(t)} d\widetilde{W}_{3}(t),
   \quad 0\le t \le T,
   \end{array}
   \right.
   \end{equation}
   where $\kappa^* = \kappa+\lambda_1$, $\theta^* =\frac{\kappa\theta}
   {\kappa+\lambda_1}$, $\alpha^*=\alpha+\lambda_2$ and $\beta^*=
   \frac{\alpha\beta}{\alpha+\lambda_2}$ are the risk-neutral
   parameters, $\{\widetilde{W}_{i}(t): 0\le t \le T\}$
   ($1\le i\le 3$) is a Brownian motion under $\mathbb Q$. Here,
   $\lambda_j$ ($j=1,2$) is the premium of volatility or interest
   rate risk.
   
   \medskip
   The market dynamics is modelled by a continuous-time observable
   Markov chain ${\bf X}=\{X(t): 0\le t \le T \}$ with a finite
   state space $S=\{s_1,s_2,...,s_N\}$. Without loss of generality,
   $S$ can be identified with the set of unit vectors $\{e_1,
   e_2,..., e_N\}$, where $e_i=(0,...,1,...,0)^{\intercal} \in
   \mathbb {R}^N$. An $N$-by-$N$ rate matrix $Q=(q_{ij})_{1 \leq
   	i,j \leq N}$ is used to generate the evolution of the chain
   under $\mathbb Q$. Here, $q_{ij} \geq 0$ for all $1\le i, j
   \le N$ with $i \neq j$ and $\sum_{i=1}^{N} q_{ij}=0$ for all
   $1 \le j \le N$. According to \cite{Elliott-Wilson:07}, a
   semi-martingale representation holds for the process ${\bf X}$
   as follows
   \begin{equation}\label{dynamicX2}
   X(t)= X(0) + \int_0^t {QX(s) ds} + M(t),
   \end{equation}
   where $\{M(t):0 \leq t \leq T \}$ is a $\mathbb R^{N}$-valued
   martingale with respect to the filtration generated by $\bf X$
   under $\mathbb Q$. The regime-switching effect is captivated
   in our Heston-CIR model by assuming that the asset price, its
   volatility and the interest rate depend on market trends or
   other economic factors indicated by the regime-switching
   Markov chain ${\bf X}$. More precisely, the long-term mean of
   variance $\theta^*(t)$ of the asset price is given by
   $\theta^*(t)=\langle \theta^*,X(t) \rangle$, where $\theta^*=
   (\theta^*_1,\theta^*_2,...,\theta^*_N)^{\intercal}$
   with $\theta^*_i > 0$, for each $1 \le i \le N$, and $\langle
   \cdot, \cdot \rangle$ denotes the scalar product in $\mathbb
   R^{N}$. Similarly, the long-term mean of
   the interest rate $\beta^*(t)$ is given by $\beta^*(t)=\langle
   \beta^*,X(t) \rangle$, where $\beta^* = (\beta^*_1, \beta^*_2,
   ..., \beta^*_N)^{\intercal}$ with $\beta^*_i > 0$, for each
   $1 \le i \le N$. The Heston-CIR model under $\mathbb Q$ with
   regime switching is given by
   \begin{equation}\label{riskneutralRS}
   \left \{
   \begin{array}{ll}
   dS(t)=r(t)S(t)dt+\sqrt{\nu(t)}S(t)d\widetilde{W}_{1}(t),
   \quad 0\le t \le T,\\[0.5em]
   d\nu(t)=\kappa^*
   (\theta^*(t)-\nu(t))dt+\sigma\sqrt{\nu(t)}d\widetilde{W}_{2}(t),
   \quad 0\le t \le T,\\[0.5em]
   dr(t)=\alpha^*(\beta^*(t)-r(t))dt +\eta \sqrt{r(t)}
   d\widetilde{W}_{3}(t), \quad 0\le t \le T.
   \end{array}
   \right.
   \end{equation}
   Applying the Cholesky decomposition, we can re-write SDEs
   (\ref{riskneutralRS}) as
   \begin{equation}\label{measuretower1}
   \left(\begin{array}{cc} \frac {dS(t)}{S(t)} \\[0.5em]
   d\nu(t)\\ dr(t)\\ \end{array} \right)
   =\mu^{\mathbb Q} dt
   +\Sigma \times C \times \left(\begin{array}{cc} d W^*_{1}(t)
   \\[0.5em]
   d W^*_{2}(t) \\[0.5em]
   d W^*_{3}(t)\end{array} \right), \quad 0\le t \le T,
   \end{equation}
   with
   \begin{equation}\label{decomposechap5}
   \mu^{\mathbb Q}=\left( \begin{array}{cc} r(t) \\
   \kappa^*(\theta^*(t)-\nu(t))\\
   \alpha^*(\beta^*(t)-r(t))\\ \end{array} \right), \quad
   \Sigma=\left(
   \begin{array}{ccc} \sqrt{\nu(t)} & 0 & 0 \\ 0 &
   \sigma\sqrt{\nu(t)} & 0 \\ 0 & 0 & \eta \sqrt{r(t)}
   \end{array}
   \right)\\
   \end{equation}
   and
   \begin{equation*}
   C =\left(\begin{array}{ccc} 1 & 0 & 0 \\ \rho & \sqrt{1-\rho^2}
   & 0 \\ 0 & 0 & 1 \end{array} \right)
   \end{equation*}
   such that
   \begin{equation*}
   C C ^{\intercal}=\left(\begin{array}{ccc} 1 & \rho & 0 \\ \rho & 1 & 0 \\ 0 & 0 & 1 \end{array} \right)
   \end{equation*}
   and $dW^*_{1}(t)$, $dW^*_{2}(t)$ and $dW^*_{3}(t)$ are mutually
   independent under $\mathbb Q$ satisfying
   \begin{equation*}
   \left(\begin{array}{cc} d \widetilde{W}_{1}(t)
   \\[0.5em]
   d \widetilde{W}_{2}(t) \\[0.5em]
   d \widetilde{W}_{3}(t)\end{array} \right)
   = C \times
   \left( \begin{array}{cc} dW^*_{1}(t)
   \\[0.5em]
   dW^*_{2}(t) \\[0.5em]
   dW^*_{3}(t) \end{array} \right), \quad 0\le t \le T.
   \end{equation*}
   
   %%%%%%%%%%%%%%%%%%%%%%%%%%%%%%%%%%%%%%%%%%%%%%%%%%%%%%%%%
   \subsection{Model dynamics under $T$-forward measure}
   %%%%%%%%%%%%%%%%%%%%%%%%%%%%%%%%%%%%%%%%%%%%%%%%%%%%%%%%
   
   In this subsection, we convert dynamics of the Heston-CIR model with
   regime-switching under $\mathbb Q$ to one under the $T$-forward
   measure $\mathbb Q^T$. To this end, we first derive a
   regime-switching exponential affine form for the price
   $P(t,T,r(t),X(t))$ of a zero-coupon bond under $\mathbb Q$.
   
   \medskip
   Assume that the bond price $P(t,T,r(t),X(t))$ under $\mathbb Q$ has
   the following exponential affine form
   \begin{equation}
   P(t,T,r(t),X(t))=e^{A(t,T,X(t))-B(t,T)r(t)},
   \end{equation}
   where $A(t,T,X(t))$ and $B(t,T)$ are to be determined. The discounted
   bond price is given by
   \begin{equation}
   \widetilde{P}(t,T,r(t),X(t))=e^{-\int_{0}^{t}r(s)ds} P(t,T,r(t),X(t)).
   \end{equation}
   Applying It$\hat{\rm o}$'s formula to $\widetilde{P}(t,T,r(t),X(t))$
   and noting that the non-martingale terms must sum up to zero, we
   obtain
   \begin{equation}\label{ito}
   \begin{array}{ll}
   \dfrac{\partial P}{\partial t} + \alpha^*(\beta^{*}(t)-r)
   \dfrac{\partial P}{\partial r} + \langle \textbf{P},QX(t)\rangle + \dfrac{1}{2}\dfrac{\partial^2 P}{\partial r^2}\eta^{2}r-rP=0,
   \end{array}
   \end{equation}
   with terminal condition $P(T,T,r(T),X(T))=1$, $\textbf{P}=(P_1,P_2,...,
   P_N)^{\intercal}$ and $P_{i}=P(t,T,r,e_{i})$ for  $1 \le i \le N$.
   Note that $X(t)$ takes one of the values from the set of unit vectors
   $\{e_1, e_2,..., e_N\}$. If $X(t)=e_{i}$ for some $1 \le i \le N$, then
   \begin{equation*}
   \begin{array}{ll}
   \theta^*(t)=\langle \theta^*,X(t)\rangle =\theta^*_{i}, \\[0.5em]
   \beta^*(t)=\langle \beta^*,X(t)\rangle =\beta^*_{i},\\[0.5em]
   P(t,T,r(t),X(t))=P(t,T,r(t),e_{i})=P_{i}.
   \end{array}
   \end{equation*}
   As a result, equation (\ref{ito}) becomes $N$ coupled PDEs
   \begin{equation}
   \begin{array}{ll}
   \dfrac{\partial P_{i}}{\partial t} + \alpha^*(\beta^{*}_{i}-r)
   \dfrac{\partial P_{i}}{\partial r} + \langle \textbf{P},Qe_{i}\rangle + \dfrac{1}{2}\dfrac{\partial^2 P_{i}}{\partial r^2}\eta^{2}r-rP_{i}=0,
   \quad 1\le i \le N
   \end{array}
   \end{equation}
   with terminal conditions $P_{i}(T,T,r(T))=1$. We then substitute the
   expressions of $\dfrac{\partial P}{\partial t}$, $\dfrac{\partial P}
   {\partial r}$ and $\dfrac{\partial^2 P}{\partial r^2}$ into the above
   PDEs to obtain the following ordinary differential equations as
   \begin{equation}\label{rsode}
   \left \{
   \begin{array}{ll}
   \dfrac{dB(t,T)}{dt}=\dfrac{1}{2}\eta^{2} B(t,T)^{2}+ \alpha^*B(t,T)-1,\\[0.8em]
   \dfrac{dA_{i}}{dt}=\alpha^*\beta^*_{i}B(t,T)-e^{-A_{i}} \langle \widetilde{\textbf{A}},Qe_{i}\rangle, \quad 1\le i \le N, \\[0.8em]
   \end{array}
   \right.
   \end{equation}
   where $A_{i}=A(t,T,e_{i})$, $\widetilde{A}_{i}=e^{A_{i}}$ and
   $\widetilde{\textbf{A}}=(\widetilde{A}_1,\widetilde{A}_2,...,
   \widetilde{A}_N)^{\intercal}$, $1\le i \le N$. The terminal
   conditions become $B(T,T)=0$ and $A_{i}(T,T)=0$.
   Similar to the CIR model in \cite{cox1985tts}, solution to the first
   equation of (\ref{rsode}) is
   \begin{equation*}
   \begin{array}{ll}
   B(t,T )=\dfrac{2\left(e^{(T-t)\sqrt{(\alpha^*)^2+2\eta^2}}-1 \right)}
   {2\sqrt{(\alpha^*)^2 +2\eta^2}+
   	\left(\alpha^*+\sqrt{(\alpha^*)^2+2\eta^2}\right)
   	\left(e^{(T-t)\sqrt{(\alpha^*)^2+2\eta^2}}-1\right)}.
   \end{array}
   \end{equation*}
   To derive an expression for $A_{i}$'s and $A(t,T,X(t))$, let
   $\Upsilon_{i}(t)=\alpha^* \beta^*_{i}B(t,T)$ for each $1 \le i \le N$,
   and let $\operatorname{diag}(\Upsilon(t))$ denote the diagonal matrix
   whose entry on the $i$-th row and the $i$-column is $\Upsilon_{i}(t)$
   for all $1\le i \le N$.
   %\begin{equation*}
   %\operatorname{diag}(\Upsilon_{1}(t),\Upsilon_{2}(t),...,\Upsilon_{N}(t)).
   %\end{equation*}
   Substituting $\widetilde{A}_{i}=e^{A_{i}}$ into (\ref{rsode}), we can
   re-write the system of ordinary differential equations in (\ref{rsode})
   as the following matrix form
   \begin{equation}\label{system}
   \begin{array}{ll}
   \dfrac{d\widetilde{\textbf{A}}}{dt}=\left(\operatorname{diag}
   (\Upsilon(t))-Q^{\intercal}\right) \widetilde{\textbf{A}}
   \end{array}
   \end{equation}
   with $\widetilde{\textbf{A}}(T,T)=\textbf{1}$, where $\textbf{1}=(1,1,...,
   1)^{\intercal} \in \mathbb {R}^N$. Let $\Phi(t)$ be the fundamental
   matrix of (\ref{system}) with $\Phi(T)= I_N$, where $I_N$ denotes the
   $N$-dimensional identity matrix. Then the solution to (\ref{system})
   with terminal condition $\widetilde{\textbf{A}}(T,T)=\textbf{1}$ can be
   expressed as $\widetilde{\textbf{A}}(t,T)=\Phi(t)\textbf{1}$. It follows
   that
   \begin{equation*}
   \widetilde{A}_{i}(t,T)= \langle \Phi(t)
   \textbf{1},e_{i} \rangle \quad \mbox{and} \quad
   A(t,T,X(t))=\ln (\langle \Phi(t)\textbf{1},X(t) \rangle).
   \end{equation*}
   
   \medskip
   Now, we implement the techniques of change of measure from
   $\mathbb Q$ to ${\mathbb Q}^T$. For brevity, let us denote
   the numeraire $e^{\int_{0}^{t}r(s)ds}$ by $N_{1,t}$ and the
   numeraire $P(t,T,r(t),X(t))$ by $N_{2,t}$. Then,
   \begin{equation*}
   \begin{array}{ll}
   \displaystyle
   d\ln N_{1,t}=r(t)dt=\left(\int_{0}^t \alpha^*(\beta^*-r(s))ds
   \right) dt +\left(\int_{0}^t
   \eta \sqrt{r(s)}d\widetilde{W}_{3}(s)\right) dt.
   \end{array}
   \end{equation*}
   So, the volatility for the numeraire $N_{1,t}$ is given by
   $\Sigma^{\mathbb Q}=(0,0,0)^{\intercal}$. Similarly,
   differentiating $\ln N_{2,t}=\ln\widetilde{A}(t,T,X(t)){-B(t,T)
   	r(t)}$ gives
   \begin{equation*}
   \begin{array}{ll}
   d\ln N_{2,t}= \left(\dfrac{\dfrac{\partial
   		\widetilde{A}(t,T,X(t))}{\partial t}}{\widetilde{A}(t,T,X(t))}
   -\dfrac{\partial B(t,T)}{\partial t}r(t) -B(t,T)
   \alpha^*(\beta^*(t)-r(t)) \right.\\[0.8em]
   \quad \quad \quad \quad \quad \left. +\Big\langle \dfrac{
   	\widetilde{\textbf{A}}(t,T)}{ \widetilde{A}(t,T,X(t))},QX(t)
   \Big\rangle \right)dt-B(t,T) \eta \sqrt{r(t)}
   d\widetilde{W}_{3}(t)\\[0.8em] \quad \quad \quad \quad \quad+
   \Big\langle \dfrac{\widetilde{\textbf{A}}(t,T)}
   {\widetilde{A}(t,T,X(t))},dM(t) \Big\rangle.
   \end{array}
   \end{equation*}
   Note that $d \widetilde{W}_{3}(t)$ and $dM(t)$ are independent.
   So, the volatility for the numeraire $N_{2,t}$ is given by
   $\Sigma^{T}=\left(0, 0, -B(t,T)\eta\sqrt{r(t)}\right)^{\intercal}$.
   
   \medskip
   Using a formula in \cite{brigo2006irm}, we see that the drift
   $\mu^{T}$ of our SDEs under $\mathbb{Q}^{T}$ with
   regime-switching is given by
   \begin{equation*}
   \mu^{T}= \mu^{\mathbb Q} - \left(\Sigma \times C \times
   C^{\intercal} \times (\Sigma^{\mathbb Q} -\Sigma^{T})\right)
   =\left(\begin{array}{cc} r(t)\\ \kappa^*(\theta^*(t)-\nu(t))\\ \alpha^*\beta^*(t)-[\alpha^*+B(t,T) \eta^2]r(t)\\ \end{array}
   \right)
   \end{equation*}
   with $\Sigma$ and $CC^{\intercal}$ as defined in (\ref{decomposechap5}).
   Therefore, the dynamics for (\ref{riskneutralRS}) under
   $\mathbb{Q}^T$ is given by
   \begin{equation}\label{ForwardMeasureSDE}
   \left(\begin{array}{cc} \frac {dS(t)}{S(t)} \\ d\nu(t)\\ dr(t)\\
   \end{array} \right)
   = \mu^{T} dt +\Sigma \times C \times \left(\begin{array}{cc} dW_{1}^*(t)\\dW_{2}^*(t)\\dW_{3}^*(t)\end{array} \right).
   \end{equation}
   In addition, under $\mathbb{Q}^T$, the semi-martingale decomposition
   of $\bf X$ is given by
   \begin{equation}\label{dynamicX}
   X(t)= X(0) + \int_0^t {Q^{T}(s) X(s) ds} + M^{T}(t),
   \end{equation}
   with the rate matrix $Q^{T}(t)=(q_{ij}^{T}(t))_{1\le i,j \le N}$
   defined by
   \begin{equation*}
   {q_{ij}}^{T}(t)= \left\{
   \begin{array}{ll}
   q_{ij}\dfrac{\widetilde{A}(t,T,e_j)}{\widetilde{A}(t,T,e_i)},
   & \mbox{$i \neq j$}, \\[0.8em]
   -\sum_{k \neq i} q_{ik}\dfrac{\widetilde{A}(t,T,e_k)}
   {\widetilde{A}(t,T,e_i)}, & \mbox{$i = j$},
   \end{array}
   \right.
   \end{equation*}
   refer to \cite{palmowski-rolski:2002} for details.
   \bigskip
   %-------------------------------------------------------------------------------------------------------------------------------------------------
   \section{Derivation of pricing formula}
   
   In this section, we will derive a semi-closed form solution to the
   problem of pricing variance swaps under stochastic volatility and
   stochastic interest rate with regime-switching using characteristic
   functions. Let $y(T)=\ln S(T+\Delta)-\ln S(T)$. We have to evaluate
   the price conditional on the information about the sample path
   of $\bf X$ from $t=0$ to $t=T+\Delta$.
   First, define ${\mathscr F}_1(t)$, ${\mathscr F}_2(t)$ and ${\mathscr
   	F}_3(t)$ as the natural filtrations generated by $\{W_1^*(s): 0\le s
   \le t\}$, $\{W_2^*(s): 0\le s \le t\}$ and $\{W_3^*(s): 0\le s \le t\}$,
   respectively. Let ${\mathscr F}_X(t)$ be the filtration generated by
   $\{X(s): 0\le s \le t\}$. To obtain the characteristic function of
   $y(T)$, we need to evaluate the following conditional expectation
   in two steps:
   
   \begin{equation}\label{regime-tower}
   \begin{array}{ll}
   \mathbb{E}^T(e^{\phi y(T)}|\mathscr F_1(t) \vee \mathscr
   F_2(t) \vee \mathscr F_3(t) \vee \mathscr F_X(t))\\[0.5em]
   =\mathbb{E}^T(\mathbb{E}^T(e^{\phi y(T)}|\mathscr F_1(t) \vee
   \mathscr F_2(t) \vee \mathscr F_3(t) \vee \mathscr F_X(T+\Delta))\\[0.5em]
   \quad |\mathscr F_1(t) \vee \mathscr F_2(t) \vee \mathscr F_3(t) \vee
   \mathscr F_X(t))
   \end{array}
   \end{equation}
   In the first step, we compute
   $\mathbb{E}^{T} \left(e^{\phi y(T)}|\mathscr F_1(t) \vee \mathscr F_2(t)
   \vee \mathscr F_3(t) \vee \mathscr F_X(T+\Delta)\right)$. In the second
   step, we compute
   $\mathbb{E}^{T}(e^{\phi y(T)}|\mathscr F_1(t) \vee \mathscr F_2(t)
   \vee \mathscr F_3(t) \vee \mathscr F_X(t))$.
   
   %%%%%%%%%%%%%%%%%%%%%%%%%%%%%%%%%%%%%%%%%%%%%%%%%%%%%%%%%%%%%%%%
   \subsection{Characteristic function for the given path $\mathscr
   	F_X(T+\Delta)$}
   %%%%%%%%%%%%%%%%%%%%%%%%%%%%%%%%%%%%%%%%%%%%%%%%%%%%%%%%%%%%%%%%
   
   We consider an enlarged filtration in which the forward characteristic
   function $f(\phi; t,T,\Delta,\nu(t),r(t))$ of $y(T)$ is defined by
   \begin{equation}
   \begin{array}{ll}
   f(\phi; t,T,\Delta,\nu(t),r(t))
   =\mathbb{E}^{T}(e^{\phi y(T)}|\mathscr F_1(t) \vee \mathscr F_2(t)
   \vee \mathscr F_3(t) \vee \mathscr F_X(t))
   \end{array}
   \end{equation}
   
   \begin{prop}
   	If the underlying asset follows the dynamics (\ref{ForwardMeasureSDE}),
   	then
   	\begin{eqnarray}\label{enlargedfiltration}
   	&&  f(\phi; t,T,\Delta,\nu(t),r(t)|\mathscr F_X(T+\Delta))\nonumber\\
   	&=&e^{C(\phi,T)} j(D(\phi,T);t,T,\nu(t)) \cdot k(E(\phi,T);t,T,r(t)),
   	\nonumber
   	\end{eqnarray}
   	where for any $0\le t \le T$, $D(\phi,t)$, $j(\phi; t,T,\nu(t))$ and
   	$k(\phi; t,T,r(t))$ are given by
   	\begin{equation}
   	\begin{array}{ll}
   	D(\phi,t)=\dfrac{a+b}{\sigma^2}\dfrac{1-e^{b(T+\Delta-t)}}
   	{1-ge^{b(T+\Delta-t)}},\\[0.8em]
   	a=\kappa^*-\rho\sigma\phi, \quad
   	b=\sqrt{a^2+\sigma^2(\phi-\phi^2)}, \quad g=\dfrac{a+b}{a-b},\\[0.8em]
   	%d=\sqrt{(\alpha^*+B(t,T)\eta^2)^2 - 2\eta^2\phi}, \quad h=\dfrac{(\alpha^*+B(t,T)\eta^2)+d}{(\alpha^*+B(t,T)\eta^2)-d},
   	\end{array}
   	\end{equation}
   	with
   	\begin{equation*}
   	\begin{array}{ll}
   	j(\phi;t,T,\nu(t))=e^{F(\phi,t)+G(\phi,t)\nu(t)}, \quad
   	F(\phi,t)=\int_t^{T} \langle \kappa^{*}\theta^{*}G(\phi,s),X(s)
   	\rangle ds,\\[0.8em]
   	G(\phi,t)=\dfrac{2 \kappa^{*}\phi}{\sigma^2 \phi+(2\kappa^{*}-\sigma^2\phi)e^{\kappa^{*}(T-t)}}, \quad
   	k(\phi;t,T,r(t))=e^{L(\phi,t)+M(\phi,t)r(t)},\\[0.8em]
   	\end{array}
   	\end{equation*}
   	and $C(\phi,t)$, $E(\phi,t)$, $L(\phi,t)$ and $M(\phi,t)$ are determined by
   	the following ODEs
   	\begin{equation}
   	\left \{
   	\begin{array}{ll}
   	-\dfrac{dE}{dt}=\dfrac{1}{2} \eta^2 E^2 -(\alpha^*+B(t,T)\eta^2) E +\phi, \\[0.8em]
   	-\dfrac{dC}{dt}=\kappa^*\theta^*(t) D+ \alpha^* \beta^*(t) E, \\[0.8em]
   	-\dfrac{dM}{dt}=\dfrac{1}{2} \eta^2 M^2 -(\alpha^*+B(t,T)\eta^2)M, \\[0.8em]
   	-\dfrac{dL}{dt}=\alpha^* \beta^*(t)M.
   	\end{array}
   	\right.
   	\end{equation}
   \end{prop}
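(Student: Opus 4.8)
The plan is to prove the factorization by two successive applications of the Feynman--Kac theorem, first on the sampling window $[T,T+\Delta]$ and then on $[t,T]$, using that once the path of ${\bf X}$ is fixed the model is a (time-inhomogeneous) affine process in $(\nu,r)$ and that the noise $W_3^*$ driving $r$ is independent of $W_1^*$ and $W_2^*$. Since $\mathscr F_X(t)\subseteq\mathscr F_X(T+\Delta)$, conditioning on $\mathscr F_X(T+\Delta)$ freezes the whole trajectory of ${\bf X}$ on $[0,T+\Delta]$, so that $\theta^*(s)=\langle\theta^*,X(s)\rangle$ and $\beta^*(s)=\langle\beta^*,X(s)\rangle$ become known piecewise-constant functions of $s$. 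Applying It\^o's formula to $\ln S$ in (\ref{ForwardMeasureSDE}) gives
\begin{equation*}
y(T)=\int_T^{T+\Delta}\Big(r(s)-\tfrac12\nu(s)\Big)\,ds+\int_T^{T+\Delta}\sqrt{\nu(s)}\,dW_1^*(s),
\end{equation*}
which does not involve $S(t)$, so by the tower property $f(\phi;t,T,\Delta,\nu(t),r(t)\mid\mathscr F_X(T+\Delta))$ equals the $\mathscr F_1(t)\vee\mathscr F_2(t)\vee\mathscr F_3(t)\vee\mathscr F_X(T+\Delta)$-conditional expectation of the $\mathscr F_1(T)\vee\mathscr F_2(T)\vee\mathscr F_3(T)\vee\mathscr F_X(T+\Delta)$-conditional expectation of $e^{\phi y(T)}$; I treat these two conditional expectations in turn, using $s$ for the running time variable.

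For the inner expectation, write $\xi(s)=\ln S(s)-\ln S(T)$ and set, for $T\le s\le T+\Delta$,
\begin{equation*}
h(s,\xi,\nu,r)=\mathbb E^T\big(e^{\phi\,y(T)}\mid\xi(s)=\xi,\ \nu(s)=\nu,\ r(s)=r,\ \mathscr F_X(T+\Delta)\big),
\end{equation*}
so that $h(T+\Delta,\xi,\nu,r)=e^{\phi\xi}$ and, by Feynman--Kac, $\partial_s h+\mathcal L_s h=0$, where $\mathcal L_s$ is the generator of $(\xi,\nu,r)$ read off from (\ref{ForwardMeasureSDE}). Because the correlations between $W_3^*$ and $W_1^*,W_2^*$ vanish, $\mathcal L_s$ has the single second-order cross term $\rho\sigma\nu\,\partial^2_{\xi\nu}$ and no $\partial^2_{\xi r}$ or $\partial^2_{\nu r}$ term, which makes the ansatz $h=\exp\{\phi\xi+C(\phi,s)+D(\phi,s)\nu+E(\phi,s)r\}$ consistent. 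Substituting it and collecting the coefficients of $\nu$, of $r$ and of the constant produces the Riccati equation $-dD/ds=\tfrac12\sigma^2D^2-aD+\tfrac12(\phi^2-\phi)$ with $a=\kappa^*-\rho\sigma\phi$, the Riccati equation $-dE/ds=\tfrac12\eta^2E^2-(\alpha^*+B(s,T)\eta^2)E+\phi$ (the inhomogeneity $\phi$ coming from the $r\,\partial_\xi$ drift and $B(s,T)$ from the $T$-forward drift correction of $r$), and the quadrature $-dC/ds=\kappa^*\theta^*(s)D+\alpha^*\beta^*(s)E$, all with zero terminal value at $s=T+\Delta$. Reducing the $D$-equation to a linear second-order equation by the standard substitution integrates it to the displayed closed form in $a$, $b$, $g$. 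Setting $s=T$ (so $\xi(T)=0$) gives the inner expectation as $\exp\{C(\phi,T)+D(\phi,T)\nu(T)+E(\phi,T)r(T)\}$.

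For the outer expectation I take the $\mathscr F_1(t)\vee\mathscr F_2(t)\vee\mathscr F_3(t)\vee\mathscr F_X(T+\Delta)$-conditional expectation of this. Because $\nu$ is driven only by $W_1^*,W_2^*$, $r$ only by the independent $W_3^*$, and the chain path is frozen, $\nu(T)$ and $r(T)$ are conditionally independent, so the outer expectation factors as $e^{C(\phi,T)}\,\mathbb E^T(e^{D(\phi,T)\nu(T)}\mid\nu(t))\cdot\mathbb E^T(e^{E(\phi,T)r(T)}\mid r(t))$. Each factor is an exponential-affine conditional moment generating function, again obtained by Feynman--Kac: with ansatz $e^{F(\psi,t)+G(\psi,t)\nu(t)}$ one finds $\mathbb E^T(e^{\psi\nu(T)}\mid\nu(t))$, where $G$ solves $-dG/dt=\tfrac12\sigma^2G^2-\kappa^*G$ with $G(\psi,T)=\psi$ --- whose explicit solution is the stated formula for $G$ --- and $F(\psi,t)=\int_t^T\kappa^*\theta^*(s)G(\psi,s)\,ds=\int_t^T\langle\kappa^*\theta^*G(\psi,s),X(s)\rangle\,ds$; likewise $\mathbb E^T(e^{\psi r(T)}\mid r(t))=e^{L(\psi,t)+M(\psi,t)r(t)}$ with $M$, $L$ solving the stated ODEs and $M(\psi,T)=\psi$. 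Specialising $\psi=D(\phi,T)$ in the first factor and $\psi=E(\phi,T)$ in the second identifies them with $j(D(\phi,T);t,T,\nu(t))$ and $k(E(\phi,T);t,T,r(t))$, and multiplying by $e^{C(\phi,T)}$ gives the asserted identity.

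The step that needs the most care is the junction at $s=T$. One must verify that the exponential-affine ansatz is legitimate on each sub-interval --- which hinges on the $r$-block decoupling from the $(\xi,\nu)$-block at second order, i.e.\ on the vanishing of the two correlations involving $W_3^*$ --- and that the coefficient functions match continuously across $s=T$; this last point is precisely why $D(\phi,T)$ and $E(\phi,T)$, rather than $\phi$, are the arguments fed into $j$ and $k$, since by construction $G(D(\phi,T),T)=D(\phi,T)$ and $M(E(\phi,T),T)=E(\phi,T)$. The remaining work --- checking the closed forms for $B$, $D$ and $G$, the quadratures for $C$, $F$ and $L$, and that $\phi$ lies in the strip where the relevant exponential moments of $\nu(T)$ and $r(T)$ are finite --- is routine, with the regime-switching coefficients $\theta^*(\cdot)$ and $\beta^*(\cdot)$ simply carried inside the integrals once ${\bf X}$ has been conditioned on.
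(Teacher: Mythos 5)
Your proposal is correct and follows essentially the same route as the paper's proof: the tower property splitting the expectation at $T$, a Feynman--Kac PDE with exponential-affine ansatz on $[T,T+\Delta]$ yielding the Riccati system for $C,D,E$, and then affine moment-generating functions $j$ and $k$ of $\nu(T)$ and $r(T)$ on $[t,T]$ evaluated at $D(\phi,T)$ and $E(\phi,T)$. Your explicit appeal to the conditional independence of $\nu(T)$ and $r(T)$ (via the vanishing correlations with $W_3^*$) to justify the factorization into $j\cdot k$ is a point the paper leaves implicit, but it is the same argument.
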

   
   \begin{proof}
   	Here, we give a brief proof for Proposition 1. We represent the
   	conditional forward characteristic function for $y(T)$ as
   	\begin{eqnarray}\label{chacfxofy}
   	&& f(\phi;t,T,\Delta,\nu(t),r(t)|\mathscr F_X(T+\Delta)) \nonumber \\
   	&=& \mathbb{E}^T(\mathbb{E}^T(e^{\phi y(T)}|\mathscr F_1(T) \vee
   	\mathscr F_2(T) \vee \mathscr F_3(T)\vee \mathscr F_X(T+\Delta))\\
   	&& |\mathscr F_1(t) \vee \mathscr F_2(t) \vee \mathscr F_3(t)\vee \mathscr
   	F_X(T+\Delta)). \nonumber
   	\end{eqnarray}
   	We first focus on calculating the inner expectation
   	\begin{equation*}
   	\mathbb{E}^{T}(e^{\phi y(T)}|\mathscr F_1(T) \vee \mathscr F_2(T) \vee \mathscr
   	F_3(T)\vee \mathscr F_X(T+\Delta)).
   	\end{equation*}
   	By defining function
   	\begin{equation*}
   	U(\phi;t,\widetilde{s},\nu,r)=\mathbb{E}^T(e^{\phi y(T)}|\mathscr F_1(t)
   	\vee \mathscr F_2(t) \vee \mathscr F_3(t) \vee \mathscr F_X(T+\Delta))
   	\end{equation*}
   	with $T \leq t \leq T+\Delta$, and applying the Feynman-Kac theorem, we obtain
   	\begin{equation}\label{eqn:pdeinner}
   	\begin{array}{ll}
   	\dfrac{\partial U}{\partial t}+\dfrac{1}{2}\nu\dfrac{\partial^2 U}{\partial
   		\widetilde{s}^2} + \dfrac{1}{2}\sigma^2\nu\dfrac{\partial^2 U}
   	{\partial \nu^2} + \dfrac{1}{2}\eta^2 r \dfrac{\partial^2 U}
   	{\partial r^2}+ \rho\sigma \nu \dfrac{\partial^2 U}
   	{\partial \widetilde{s}\partial \nu} +\left(r -\dfrac{1}{2}\nu \right)
   	\dfrac{\partial U}{\partial \widetilde{s}}\\[0.8em]
   	+\left(\kappa^*(\theta^*(t)-\nu)\right) \dfrac{\partial U}{\partial \nu} + \left(\alpha^*\beta^*(t)-(\alpha^*+B(t,T)\eta^2)r \right)
   	\dfrac{\partial U}{\partial r}=0,\\[0.8em]
   	U(\phi;t=T+\Delta,\widetilde{s},\nu,r)=e^{\phi y(T)},
   	\end{array}
   	\end{equation}
   	where $\widetilde{s}(t)=\ln S(t)-\ln S(T)$ in $T \leq t \leq T+\Delta$.
   	In order to solve (\ref{eqn:pdeinner}), we assume $U(\phi;t,\widetilde{s},
   	\nu(t),r(t))$  in \cite{hestoncfs1993} has the following affine form
   	\begin{equation}\label{eqn:affineinner}
   	U(\phi;t,\widetilde{s},\nu(t),r(t))=e^{C(\phi,t)+D(\phi,t)\nu +E(\phi,t)r
   		+\phi \widetilde{s}}.
   	\end{equation}
   	Substituting (\ref{eqn:affineinner}) into (\ref{eqn:pdeinner}), we obtain
   	the following three ODEs
   	\begin{equation} \label{eqn:odestep1}
   	\left \{
   	\begin{array}{ll}
   	-\dfrac{dD}{dt}=\dfrac{1}{2}\phi(\phi-1)+(\rho\sigma\phi-\kappa^{*})D +
   	\dfrac{1}{2}\sigma^2 D^2,\\[0.8em]
   	-\dfrac{dE}{dt}=\dfrac{1}{2} \eta^2 E^2 -(\alpha^*+B(t,T)\eta^2) E +\phi, \\[0.8em]
   	-\dfrac{dC}{dt}=\kappa^*\theta^*(t) D+ \alpha^* \beta^*(t) E, \\[0.8em]
   	\end{array}
   	\right.
   	\end{equation}
   	with the initial conditions
   	\begin{equation}
   	C(\phi,T+\Delta)=0, \quad D(\phi,T+\Delta)=0, \quad E(\phi,T+\Delta)=0.
   	\end{equation}
   	Then, we can write the solution to the first ODE in (\ref{eqn:odestep1})
   	as
   	\begin{equation}
   	\left \{
   	\begin{array}{ll}
   	D(\phi,t)=\dfrac{a+b}{\sigma^2}\dfrac{1-e^{b(T+\Delta-t)}}
   	{1-ge^{b(T+\Delta-t)}},\\[0.8em]
   	a=\kappa^*-\rho\sigma\phi, \quad
   	b=\sqrt{a^2+\sigma^2(\phi-\phi^2)}, \quad g=\dfrac{a+b}{a-b}.\\[0.8em]
   	\end{array}
   	\right.
   	\end{equation}
   	Numerical integration is required to obtain the solutions of $E$ and $C$.
   	
   	\medskip
   	Now, we move on to solve the outer expectation for $0 \leq t \leq T$. At
   	$t=T$,
   	\begin{equation*}
   	\begin{array}{ll}
   	\quad \mathbb{E}^T(e^{\phi y(T)}|\mathscr F_1(T) \vee \mathscr F_2(T)
   	\vee \mathscr F_3(T)\vee \mathscr F_{X}(T+\Delta))\\[0.8em]
   	=U(\phi;t=T,\widetilde{s}(T),\nu(T),r(T))\\[0.8em]
   	=e^{C(\phi,T)+D(\phi,T)\nu(T) +E(\phi,T)r(T)}.
   	\end{array}
   	\end{equation*}
   	Define the following characteristic functions of $\nu(t)$ and $r(t)$,
   	respectively
   	\begin{equation*}
   	j(\phi;t,T,\nu(t))=\mathbb{E}^{T}(e^{\phi \nu(T)}|\mathscr F_1(t) \vee
   	\mathscr F_2(t) \vee \mathscr F_3(t)),
   	\end{equation*}
   	and
   	\begin{equation*}
   	k(\phi;t,T,r(t))=\mathbb{E}^{T}(e^{\phi r(T)}|\mathscr F_1(t) \vee
   	\mathscr F_2(t) \vee \mathscr F_3(t)).
   	\end{equation*}
   	Then, we obtain the respective PDEs as
   	\begin{equation}
   	\left\{
   	\begin{array}{ll}\label{functionj}
   	\dfrac{\partial j}{\partial t}+\dfrac{1}{2}\sigma^2 \nu \dfrac{\partial^2 j}
   	{\partial \nu^2}+\left(\kappa^*(\theta^*(t)-\nu)\right)\dfrac{\partial j}{\partial \nu}=0,\\[0.8em]
   	j(\phi,t=T,T,\nu)=e^{\phi \nu},
   	\end{array}
   	\right.
   	\end{equation}
   	and
   	\begin{equation}\label{PDEk}
   	\left\{
   	\begin{array}{ll}
   	\dfrac{\partial k}{\partial t}+\dfrac{1}{2}\eta^2 r \dfrac{\partial^2 k}
   	{\partial r^2}+\left(\alpha^*\beta^*(t)-(\alpha^*+B(t,T)\eta^2)r \right)
   	\dfrac{\partial k}{\partial r}=0,\\[0.8em]
   	k(\phi,t=T,T,r)=e^{\phi r}.
   	\end{array}
   	\right.
   	\end{equation}
   	Taking advantage of the affine-form solution techniques as those in \cite{Duffie-Pan-Singleton:00,hestoncfs1993},
   	we assume the solution to (\ref{functionj}) is in the form of
   	\begin{equation}
   	j(\phi;t,T,\nu(t))=e^{F(\phi,t)+G(\phi,t)\nu(t)}.
   	\end{equation}
   	The functions $F(\phi,t)$ and $G(\phi,t)$ can be found by solving two ODEs
   	\begin{equation}
   	\left\{
   	\begin{array}{ll}
   	-\dfrac{dG}{dt}=\dfrac{1}{2} \sigma^2 G^2 -\kappa^{*}G, \\[0.8em]
   	-\dfrac{dF}{dt}=\kappa^* \theta^{*}(t)G, \\
   	\end{array}
   	\right.
   	\end{equation}
   	with the initial conditions
   	\begin{equation}
   	F(\phi,T)= 0,  \quad \quad G(\phi,T)=\phi.
   	\end{equation}
   	The solutions are
   	\[
   	F(\phi,t) =\int_t^T \kappa^{*}\theta^{*}(s)G(\phi,s)ds,\quad
   	G(\phi,t)= \dfrac{2 \kappa^{*}\phi}{\sigma^2 \phi+(2\kappa^{*}-\sigma^2\phi)e^{\kappa^{*}(T-t)}}.
   	\]
   	Next, the function $k(\phi;t,T,r(t)) = e^{L(\phi,T)+M(\phi,t)r(t)}$ is defined
   	in order to derive a solution to (\ref{PDEk}). The initial conditions are
   	$L(\phi,T)= 0$ and $M(\phi,T)=\phi$. Then, $L$ and $M$ satisfy the following ODEs
   	\begin{equation}
   	\left \{
   	\begin{array}{ll}
   	-\dfrac{dM}{dt}=\dfrac{1}{2} \eta^2 M^2 -(\alpha^*+B(t,T)\eta^2)M, \\[0.8em]
   	-\dfrac{dL}{dt}=\alpha^* \beta^*(t)M,
   	\end{array}
   	\right.
   	\end{equation}
   	Combining the inner and outer expectation computations, we obtain
   	the result claimed in the proposition.
   \end{proof}
   
   %%%%%%%%%%%%%%%%%%%%%%%%%%%%%%%%%%%%%%%%%%%%%%%%%%%%%%%%%%%%%%%%%%%%%%%%%%
   \subsection{Characteristic function for the given path $\mathscr F_X(t)$}
   %%%%%%%%%%%%%%%%%%%%%%%%%%%%%%%%%%%%%%%%%%%%%%%%%%%%%%%%%%%%%%%%%%%%%%%%%%
   
   In this subsection, we derive a semi-closed formula for the characteristic
   function $f(\phi; t,T,\Delta,\nu(t),r(t))$. To achieve this, we need to
   evaluate the equation (\ref{enlargedfiltration}), where $\theta^{*}(t)$ and
   $\beta^{*}(t)$ depend on the path of the Markov chain process $\bf X$ up to
   $T+\Delta$,
   \begin{eqnarray}\label{core}
   && f(\phi; t,T,\Delta,\nu(t),r(t)) \nonumber\\[0.8em]
   &=&\mathbb{E}^T(e^{C(\phi,T)} \cdot j(D(\phi,T);t,T,\nu(t)) \cdot
   k(E(\phi,T);t,T,r(t)) \nonumber \\
   && |\mathscr F_1(t) \vee \mathscr F_2(t) \vee \mathscr F_3(t)
   \vee \mathscr F_X(t)) \nonumber \\
   &=&\mathbb{E}^{T}\left(\exp \left(\int_T^{T+\Delta} {\langle \alpha^{*}
   	\beta^{*}E(\phi,s) + \kappa^{*}\theta^{*}D(\phi,s), X(s) \rangle} ds \right.\right.\nonumber\\[0.8em]
   && +\int_t^{T} \langle \kappa^{*}\theta^{*}G(D(\phi,T),s),X(s) \rangle ds
   +\int_t^{T} \langle \alpha^{*}\beta^{*}M(E(\phi,T),s),X(s) \rangle ds
   \nonumber\\[0.8em]
   && +\dfrac{2 \kappa^{*}D(\phi,T)}{\sigma^2 D(\phi,T)+(2\kappa^{*}-\sigma^2
   	D(\phi,T)) e^{\kappa^{*}(T-t)}} \nu(t)  \\[0.5em]
   && \left.\quad+r(t)\int_t^T {\left(\dfrac{1}{2} \eta^2 M^2(E(\phi,T),s)-(\alpha^*+B(s,T)\eta^2)M(E(\phi,T),s) \right)}ds \right)
   \nonumber \\[0.5em]
   && \left.\quad\Big|\mathscr F_1(t) \vee \mathscr F_2(t) \vee
   \mathscr F_3(t) \vee \mathscr F_X(t)\right) \nonumber\\[0.8em]
   &=&\mathbb{E}^T\left(\exp(\int_t^{T+\Delta} \langle J(s),X(s) \rangle ds)
   |\mathscr F_1(t) \vee \mathscr F_2(t) \vee \mathscr F_3(t)
   \vee \mathscr F_X(t)\right) \nonumber \\[0.8em]
   && \quad\times \exp\left(\nu(t)G(D(\phi,T),t)\right)
   \times \exp\left(r(t)M(E(\phi,T),t)\right). \nonumber
   \end{eqnarray}
   Here, the function $J(t) \in \mathbb{R}^{N}$ is given by
   \begin{equation}
   \begin{array}{ll}
   J(t)=(\kappa^{*}\theta^{*}G(D(\phi,T),t)+\alpha^{*}\beta^{*}M(E(\phi,T),t)) (1-H_{T}(t))\\[0.8em]
   \quad \quad \quad + (\alpha^{*}\beta^{*}E(\phi,t)+\kappa^{*}\theta^{*}
   D(\phi,t))H_{T}(t)
   \end{array}
   \end{equation}
   along with $H_{T}(t)$ which is a Heaviside unit step function defined as
   \begin{equation*}
   H_T(t)=\left\{\begin{array}{ll}
   1, & \mbox{if $t \geq T$},\\
   0, & \mbox{else}.\end{array}\right.
   \end{equation*}
   
   \begin{prop}
   	Let $\{X(t): 0\le t \le T\}$ be a regime-switching Markov chain with
   	dynamics given by (\ref{dynamicX}). Under ${\mathbb Q}^T$,
   	$\exp \left(\int^{T}_t {\langle J(s),X(s) \rangle  ds} \right)$ is
   	given by
   	\begin{equation} \label{eqn:tforward}
   	\begin{array}{ll}
   	\quad \mathbb{E}^T\left(\exp \left(\int^{T}_t {\langle J(s),X(s) \rangle
   		ds} \right) \Big|\mathscr F_1(t) \vee \mathscr F_2(t) \vee \mathscr F_3(t)
   	\vee \mathscr F_X(t) \right)\\[0.8em]
   	=\langle \Phi(t,T;J)X(t), \mathbf{1} \rangle,
   	\end{array}
   	\end{equation}
   	where the function $\Phi(t,T;J)$ is an $N$-by-$N$ $\mathbb{R}$-valued matrix
   	given by
   	\begin{equation}
   	\Phi(t,T;J)=\exp \left( \int^{T}_t {(Q^{T}(s) + \operatorname{diag}(J(s)))
   		ds} \right),
   	\end{equation}
   	with $\mathbf{1}=(1,1,...,1) \in \mathbb{R}^{N}$.
   \end{prop}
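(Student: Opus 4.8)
The plan is to reduce the claimed identity to a linear matrix ODE (a Feynman–Kac type argument for a Markov chain) and then recognise the matrix exponential as its fundamental solution. First I would fix $t$ and define the auxiliary process
\begin{equation*}
Y(s) = \mathbb{E}^T\!\left(\exp\!\left(\int_s^{T} \langle J(u),X(u)\rangle\, du\right) \Big| \mathscr F_X(s)\right), \qquad t\le s\le T,
\end{equation*}
noting that by the Markov property of $\bf X$ under $\mathbb Q^T$ (the conditioning on $\mathscr F_1(t)\vee\mathscr F_2(t)\vee\mathscr F_3(t)$ drops out since $J$ depends only on deterministic functions and $\bf X$, which is independent of the driving Brownian motions $W_i^*$). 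Because $X(s)$ takes values in the unit vectors $\{e_1,\dots,e_N\}$, one can write $Y(s)=\langle h(s,T;J)\,X(s), \mathbf{1}\rangle$ for a vector-valued deterministic function $h(s,T;J)\in\mathbb R^N$ whose $i$-th component is the conditional expectation given $X(s)=e_i$; equivalently $h$ is an $N$-by-$N$ matrix acting on $X(s)$.

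The key step is to derive the backward equation for $h$. Using the semi-martingale decomposition (\ref{dynamicX}), $dX(s) = Q^T(s)X(s)\,ds + dM^T(s)$, I would apply the product rule to the process
\begin{equation*}
Z(s) = \exp\!\left(\int_t^{s}\langle J(u),X(u)\rangle\,du\right)\,\langle h(s,T;J)X(s),\mathbf{1}\rangle,
\end{equation*}
which is a $\mathbb Q^T$-martingale on $[t,T]$ by construction. Differentiating, the finite-variation part must vanish, giving
\begin{equation*}
\frac{\partial}{\partial s}\,h(s,T;J)^{\intercal}\mathbf{1} + \langle J(s),X(s)\rangle\, h(s,T;J)^{\intercal}\mathbf{1}\, + (Q^T(s))^{\intercal}\, h(s,T;J)^{\intercal}\mathbf{1}=0
\end{equation*}
in the appropriate coordinate form; since this holds for every state $X(s)=e_i$ and since $\langle J(s),e_i\rangle = J_i(s)$ picks out the $i$-th diagonal entry, the term $\langle J(s),X(s)\rangle$ becomes multiplication by $\operatorname{diag}(J(s))$. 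This yields the linear matrix ODE
\begin{equation*}
\frac{\partial}{\partial s}\,\Phi(s,T;J) = -\big(Q^T(s)+\operatorname{diag}(J(s))\big)\,\Phi(s,T;J), \qquad \Phi(T,T;J)=I_N,
\end{equation*}
where the terminal condition comes from the fact that at $s=T$ the integral is empty so $Y(T)=1=\langle I_N\,X(T),\mathbf 1\rangle$. Solving this ODE (the coefficient matrices at different times need not commute, but the fundamental-matrix solution is still well-defined, and here one simply integrates to get the exponential of the integral — I would invoke \cite{palmowski-rolski:2002} for the justification that the matrices $Q^T(s)+\operatorname{diag}(J(s))$ effectively commute in the relevant sense, or more carefully note that $\Phi(t,T;J)=\exp(\int_t^T(Q^T(s)+\operatorname{diag}(J(s)))\,ds)$ is the stated solution because the integrand is being treated as the generator of the tilted chain) gives exactly $\Phi(t,T;J)=\exp\!\big(\int_t^T (Q^T(s)+\operatorname{diag}(J(s)))\,ds\big)$. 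Evaluating $Y$ at $s=t$ and recalling that the Brownian filtrations contribute nothing then delivers (\ref{eqn:tforward}).

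The main obstacle is the handling of the time-ordering in the matrix exponential: strictly speaking the fundamental solution of a non-autonomous linear system is a time-ordered exponential, not $\exp$ of the integral, unless the generators at different times commute. I expect the paper either tacitly assumes the rate matrix structure makes this exact, or appeals to \cite{palmowski-rolski:2002} where the multiplicative-functional / Feynman–Kac formula for Markov chains is established in the form quoted. In writing the proof I would follow that reference and present the result as the Feynman–Kac representation for the additive functional $\int_t^T\langle J(s),X(s)\rangle\,ds$ of the Markov chain, with the matrix $Q^T(s)+\operatorname{diag}(J(s))$ playing the role of the Feynman–Kac-tilted generator, so that the representation $\langle\Phi(t,T;J)X(t),\mathbf 1\rangle$ is immediate once the backward ODE is identified.
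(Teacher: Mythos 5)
Your proposal is correct (at the same level of rigour as the paper) but runs in the opposite direction to the paper's argument. The paper works \emph{forward} in the terminal time: it sets $Z(t,T)=\exp\big(\int_t^T\langle J(s),X(s)\rangle\,ds\big)X(T)$, a vector-valued multiplicative functional, differentiates in $T$ using $dX=Q^T X\,dT+dM^T$ together with the identity $\langle J,X\rangle X=\operatorname{diag}(J)X$, integrates, and takes conditional expectations so that the $M^T$-integral drops out; this yields the linear integral equation $\Psi(t,T;J)=X(t)+\int_t^T(Q^T(s)+\operatorname{diag}(J(s)))\Psi(t,s;J)\,ds$ for $\Psi=\mathbb{E}^T[Z\,|\,\cdot\,]$, which is then identified with $\Phi(t,T;J)X(t)$ via the fundamental matrix, and the scalar claim follows from $\langle X(T),\mathbf 1\rangle=1$. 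You instead run the dual, \emph{backward} Feynman--Kac/martingale argument on the value process $Y(s)=\langle v(s),X(s)\rangle$, kill the drift of the martingale $\exp\big(\int_t^s\langle J,X\rangle du\big)Y(s)$, and solve the resulting backward ODE; the computational ingredients (the semimartingale decomposition of $\bf X$ and the $\operatorname{diag}(J)$ identity) are the same, and the two formulations are adjoint to each other. Two remarks: first, your coordinate bookkeeping actually produces the transposed generator $(Q^T(s))^{\intercal}+\operatorname{diag}(J(s))$ acting on the value vector $v=h^{\intercal}\mathbf 1$, which is fine because $\langle\Phi X(t),\mathbf 1\rangle$ is invariant under transposing the exponent, but you should state this rather than silently dropping the transpose in the matrix ODE; second, your caveat about the time-ordered exponential is well taken --- the paper's own proof only defines $\Phi$ as the fundamental solution of the non-autonomous system and then writes it as $\exp\big(\int_t^T(Q^T(s)+\operatorname{diag}(J(s)))\,ds\big)$ in the statement without addressing commutativity, so you are, if anything, more careful on this point than the source; likewise your appeal to independence of $\bf X$ from the Brownian drivers to discard the $\mathscr F_1\vee\mathscr F_2\vee\mathscr F_3$ conditioning matches the paper's implicit treatment. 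What the paper's forward route buys is that it never needs to argue the martingale property of the conditional-expectation process or introduce the value function; what your backward route buys is a standard, readily citable Feynman--Kac structure (as in \cite{palmowski-rolski:2002}) in which $Q^T(s)+\operatorname{diag}(J(s))$ appears transparently as the tilted generator.
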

   
   \begin{proof}
   	Consider $Z(t,T)=\exp \left(\int^{T}_t {\langle J(s),X(s) \rangle  ds} \right)
   	X(T)$. Differentiating $Z(t,T)$ and using (\ref{dynamicX}) yield
   	\begin{equation}
   	\begin{array}{ll}\label{diffZ}
   	dZ(t,T)=\exp \left(\int^{T}_t {\langle J(s),X(s) \rangle  ds} \right)
   	(Q^{T}(T)X(T)dT +dM^{T}(T))\\[0.8em]
   	\quad \quad \quad \quad \quad+ \langle J(T),X(T) \rangle \exp \left(\int^{T}_t
   	{\langle J(s),X(s) \rangle  ds} \right)X(T)dT\\[0.8em]
   	\quad \quad \quad \quad=\exp \left(\int^{T}_t {\langle J(s),X(s) \rangle  ds}
   	\right)dM^{T}(T) +\langle J(T),X(T) \rangle Z(t,T)dT\\[0.8em]
   	\quad \quad \quad \quad \quad+ \exp \left(\int^{T}_t {\langle J(s),X(s)\rangle ds} \right)Q^{T}(T)X(T) dT \\[0.8em]
   	\quad \quad \quad \quad=\exp \left(\int^{T}_t {\langle J(s),X(s) \rangle  ds} \right)dM^{T}(T)\\[0.8em]
   	\quad \quad \quad \quad \quad + \left( Q^{T}(T)+ \operatorname{diag}[J(T)]
   	\right) Z(t,T)dT \\[0.8em]
   	\end{array}
   	\end{equation}
   	Integrating both sides of (\ref{diffZ}) gives
   	\begin{equation}\label{integdiffZ}
   	\begin{array}{ll}
   	\int_t^T dZ(t,s)= \int_t^T (Q^{T}(s) + \operatorname{diag}(J(s)))Z(t,s) ds \\[0.8em]
   	\quad \quad \quad \quad \quad \quad+ \int_t^T \exp \left(\int^{s}_t {\langle J(w),X(w) \rangle  dw} \right)dM^{T}(s).
   	\end{array}
   	\end{equation}
   	Put $\Psi(t,T;J)=\mathbb{E}^{T} \left[Z(t,T) \Big|\mathscr F_1(t)
   	\vee \mathscr F_2(t) \vee \mathscr F_3(t) \vee \mathscr F_X(t) \right]$.
   	Taking expectations in both sides of (\ref{integdiffZ}) results in
   	\begin{equation}\label{psi}
   	\Psi(t,T;J)=X(t) + \int_t^T (Q^{T}(s) +  \operatorname{diag}(J(s)))\Psi(t,s;J) ds.
   	\end{equation}
   	Suppose $\Phi(t,s;J)$ is the $N \times N$ matrix solution to the linear
   	system of ordinary differential equation
   	\begin{equation}
   	\left \{
   	\begin{array}{ll}
   	\dfrac{d\Phi(t,s;J)}{ds}=(Q^{T}(s)+\operatorname{diag}(J(s))) \Phi(t,s;J),\\[0.8em]
   	\Phi(t,t;J)=\operatorname{diag}(\mathbf{1})=I_N.
   	\end{array}
   	\right.
   	\end{equation}
   	Comparing with (\ref{psi}), we obtain the result $\Psi(t,T;J)=\Phi(t,T;J)X(t)$,
   	which finally gives us formula (\ref{eqn:tforward}).
   \end{proof}
   
   Now, substituting the result in Proposition 2 into (\ref{core}) gives us the
   characteristic function of  $y(T)=\ln S(T+\Delta)-\ln S(T)$ for the Heston-CIR
   model with regime-switching.
   
   \begin{prop}
   	If the underlying asset follows the dynamics (\ref{ForwardMeasureSDE}), then
   	the forward characteristic function of  $y(T)=\ln S(T+\Delta)-\ln S(T)$ is
   	given by
   	\begin{equation}
   	\begin{array}{ll}
   	f(\phi; t,T,\Delta,\nu(t),r(t))=\mathbb{E}^{T}[e^{\phi y(T)}|\mathscr F_1(t)
   	\vee \mathscr F_2(t) \vee \mathscr F_3(t) \vee \mathscr F_X(t)]\\[0.8em]
   	\quad \quad \quad \quad \quad \quad \quad \quad \quad=\exp (\nu(t)
   	G(D(\phi,T),t))\times \exp(r(t) M(E(\phi,T),t))\\[0.8em]
   	\quad \quad \quad \quad \quad \quad \quad \quad \quad \quad \times \langle
   	\Phi(t,T+\Delta;J) X(t),\mathbf{1} \rangle,
   	\end{array}
   	\end{equation}
   	where $D(\phi,t)$, $G(\phi,t)$, $J(t)$ and $\Phi(t,T+\Delta; J)$ are given by
   	\begin{equation}
   	\begin{array}{ll}
   	D(\phi,t)=\dfrac{a+b}{\sigma^2}\dfrac{1-e^{b(T+\Delta-t)}}
   	{1-ge^{b(T+\Delta-t)}},\quad  a=\kappa^*-\rho\sigma\phi, \quad b=\sqrt{a^2+\sigma^2(\phi-\phi^2)},\\[0.8em]
   	g=\dfrac{a+b}{a-b},\quad G(\phi,t)=\dfrac{2 \kappa^{*}\phi}{\sigma^2 \phi+(2\kappa^{*}-\sigma^2\phi)e^{\kappa^{*}(T-t)}},\\[0.8em]
   	J(t)=\left(\kappa^{*}\theta^{*}G(D(\phi,T),t)+ \alpha^{*}\beta^{*}M(E(\phi,T),t)
   	\right)(1-H_{T}(t))\\[0.8em]
   	\quad \quad \quad + \left(\alpha^{*}\beta^{*}E(\phi,t) +\kappa^{*}\theta^{*}
   	D(\phi,t)\right) H_{T}(t),\\[0.8em]
   	\Phi(t,T + \Delta;J)=\exp \left(\int_t^{T+\Delta} (Q^{T}(s) +
   	\operatorname{diag}(J(s))) ds \right),
   	\end{array}
   	\end{equation}
   	and $E(\phi,t)$ along with $M(\phi,t)$ are determined by the following ODEs
   	\begin{equation}
   	\left\{
   	\begin{array}{ll}
   	-\dfrac{dE}{dt}=\dfrac{1}{2} \eta^2 E^2 -(\alpha^*+B(t,T)\eta^2) E +\phi, \\[0.8em]
   	-\dfrac{dM}{dt}=\dfrac{1}{2} \eta^2 M^2 -(\alpha^*+B(t,T)\eta^2)M.\\[0.8em]
   	\end{array}
   	\right.
   	\end{equation}
   \end{prop}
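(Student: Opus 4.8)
The plan is to obtain Proposition~3 as an immediate consequence of Propositions~1 and~2, following verbatim the chain of identities already set up in~(\ref{core}). By the tower property, conditioning first on $\mathscr F_X(T+\Delta)$ and invoking Proposition~1,
\[
f(\phi;t,T,\Delta,\nu(t),r(t))=\mathbb{E}^{T}\!\left(e^{C(\phi,T)}\,j(D(\phi,T);t,T,\nu(t))\,k(E(\phi,T);t,T,r(t))\;\big|\;\mathscr F_1(t)\vee\mathscr F_2(t)\vee\mathscr F_3(t)\vee\mathscr F_X(t)\right).
\]
Substituting the affine forms $j(\psi;t,T,\nu(t))=e^{F(\psi,t)+G(\psi,t)\nu(t)}$ and $k(\psi;t,T,r(t))=e^{L(\psi,t)+M(\psi,t)r(t)}$ with $\psi=D(\phi,T)$ and $\psi=E(\phi,T)$ respectively, the integrand becomes the exponential of
\[
C(\phi,T)+F(D(\phi,T),t)+L(E(\phi,T),t)+G(D(\phi,T),t)\,\nu(t)+M(E(\phi,T),t)\,r(t).
\]

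The key point is to separate the deterministic part of this exponent from the part that carries the randomness of $\mathbf X$. The coefficients $D(\phi,\cdot)$, $E(\phi,\cdot)$, $G(\psi,\cdot)$ and $M(\psi,\cdot)$ solve the Riccati-type ODEs of Proposition~1, whose right-hand sides contain neither $\theta^{*}(\cdot)$ nor $\beta^{*}(\cdot)$, so these four functions are deterministic. All $\mathbf X$-dependence is confined to $C$, $F$, $L$, which solve linear ODEs with source terms $\kappa^{*}\theta^{*}(t)D+\alpha^{*}\beta^{*}(t)E$, $\kappa^{*}\theta^{*}(t)G$ and $\alpha^{*}\beta^{*}(t)M$. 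Using $\theta^{*}(s)=\langle\theta^{*},X(s)\rangle$, $\beta^{*}(s)=\langle\beta^{*},X(s)\rangle$ together with the terminal conditions $C(\phi,T+\Delta)=0$, $F(\psi,T)=0$, $L(\psi,T)=0$, integrating these ODEs gives
\[
C(\phi,T)=\int_T^{T+\Delta}\big\langle\,\kappa^{*}\theta^{*}D(\phi,s)+\alpha^{*}\beta^{*}E(\phi,s),\,X(s)\,\big\rangle\,ds,
\]
\[
F(D(\phi,T),t)+L(E(\phi,T),t)=\int_t^{T}\big\langle\,\kappa^{*}\theta^{*}G(D(\phi,T),s)+\alpha^{*}\beta^{*}M(E(\phi,T),s),\,X(s)\,\big\rangle\,ds,
\]
where $\kappa^{*}\theta^{*}(\cdot)$ denotes the vector with $i$-th entry $\kappa^{*}\theta^{*}_{i}(\cdot)$, and similarly for $\alpha^{*}\beta^{*}$. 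Adding these, and noting that together they cover $[t,T+\Delta]$ with the switch occurring precisely at $s=T$, the random part of the exponent equals $\int_t^{T+\Delta}\langle J(s),X(s)\rangle\,ds$ with $J$ the Heaviside-split function written in the statement.

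It then remains to handle the two types of factors separately. The term $G(D(\phi,T),t)\nu(t)+M(E(\phi,T),t)r(t)$ is measurable with respect to $\mathscr F_1(t)\vee\mathscr F_2(t)\vee\mathscr F_3(t)$ --- indeed it is a deterministic function of the known quantities $\nu(t)$ and $r(t)$ --- so it pulls out of the conditional expectation, producing the factors $\exp(\nu(t)G(D(\phi,T),t))$ and $\exp(r(t)M(E(\phi,T),t))$. What is left is
\[
\mathbb{E}^{T}\!\left(\exp\!\Big(\int_t^{T+\Delta}\langle J(s),X(s)\rangle\,ds\Big)\;\big|\;\mathscr F_1(t)\vee\mathscr F_2(t)\vee\mathscr F_3(t)\vee\mathscr F_X(t)\right),
\]
which is exactly the quantity evaluated in Proposition~2, now with the terminal time $T$ replaced by $T+\Delta$. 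The argument of Proposition~2 --- differentiate $Z(t,s)=\exp(\int_t^{s}\langle J(w),X(w)\rangle\,dw)X(s)$, use the semimartingale decomposition~(\ref{dynamicX}) of $\mathbf X$ under $\mathbb{Q}^{T}$, take conditional expectations to annihilate the martingale part, and match against the linear matrix ODE for $\Phi$ --- uses nothing special about $T$, so it applies verbatim and yields $\langle\Phi(t,T+\Delta;J)X(t),\mathbf{1}\rangle$. Multiplying the three factors gives the asserted formula, with $E(\phi,t)$ and $M(\phi,t)$ carrying the terminal conditions $E(\phi,T+\Delta)=0$ and $M(\phi,T)=\phi$ inherited from Proposition~1. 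I expect the only genuine work to be the bookkeeping of the second step: keeping the ``outer'' argument $\phi$ distinct from the plugged-in arguments $D(\phi,T)$ and $E(\phi,T)$, and checking that the $[t,T]$ and $[T,T+\Delta]$ contributions glue into precisely $J$; once that is verified, the statement follows by quoting Propositions~1 and~2.
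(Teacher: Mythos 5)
Your proof is correct and follows essentially the same route as the paper: the bookkeeping you carry out (integrating the ODEs for $C$, $F$, $L$, gluing the $[t,T]$ and $[T,T+\Delta]$ contributions into $J$, and pulling out the $\nu(t)$ and $r(t)$ factors) is exactly the content of equation (\ref{core}), and the paper then obtains the proposition, just as you do, by substituting Proposition 2 applied on the horizon $[t,T+\Delta]$ into (\ref{core}).
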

   
   Now, by using the valuation of the fair delivery price for a variance swap,
   and summarizing the whole previous procedure, we can write the forward
   characteristic function for a variance swap as
   \begin{equation}
   \begin{array}{ll}
   \mathbb{E}^{T}\left(\left(\frac{S(t_{j})}{S(t_{j-1})}-1 \right)^2 \Big|
   \mathscr F_1(0) \vee \mathscr F_2(0) \vee \mathscr F_3(0) \vee
   \mathscr F_X(0) \right)\\[0.8em]
   =\mathbb{E}^T\left(e^{2y(t_{j-1})}-2e^{y(t_{j-1})}+1 |\mathscr F_1(0) \vee
   \mathscr F_2(0) \vee \mathscr F_3(0) \vee \mathscr F_X(0)\right)\\[0.8em]
   =f(2;0,t_{j-1},\Delta t,\nu(0),r(0))-2f(1;0,t_{j-1},\Delta t, \nu(0),r(0))+1,
   \end{array}
   \end{equation}
   where $y(t_{j-1})=\ln S(t_{j})-\ln S(t_{j-1})$, $\Delta t=t_{j}-t_{j-1}$, and
   the characteristic function $f(\phi; t,T,\Delta,\nu(t),r(t))$ is given in
   equation (\ref{core}). Hence, the fair strike price for a variance swap in terms
   of the spot variance $\nu(0)$ and the spot interest rate $r(0)$ under $T$-forward
   measure is given as
   \begin{eqnarray}\label{finalformchap5}
   K&=&\mathbb{E}^{T}(RV)  \\
   &=&\frac{100^2}{T}\sum_{j=1}^N \left(f(2;0,t_{j-1},\Delta t, \nu(0), r(0))
   -2f(1;0,t_{j-1},\Delta t, \nu(0), r(0))+1\right). \nonumber
   \end{eqnarray}
   
   %%%%%%%%%%%%%%%%%%%%%%%%%%%%%%%%%%%%%%%%%%%%%%%%%%%
   \section{Formula validation and results}
   %%%%%%%%%%%%%%%%%%%%%%%%%%%%%%%%%%%%%%%%%%%%%%%%%%%
   
   In this section, we assess the performance of formula \eqref{finalformchap5},
   by considering three regimes, denoted as $\{e_1, e_2, e_3\}$, representing
   the states $\emph{contraction}$, $\emph{trough}$ and $\emph{expansion}$ of
   the business cycle, respectively. The contraction state can be defined as
   the situation when the economy starts slowing down, whereas the trough state
   happens when the economy hits bottom, usually in a recession. In addition,
   expansion is identified as the situation when the economy starts growing
   again. Here, we assume that the Heston-CIR model without regime-switching
   corresponds to the first regime and it will switch to the other two regimes
   over time. Table 1 shows the set of parameters that we use to implement
   all the numerical experiments, unless otherwise stated.
   \begin{table}[htb]
   	\caption{\label{ParaHeston} Model parameters of the Heston-CIR hybrid model
   		with regime-switching.}
   	\renewcommand\tabcolsep{1pt}
   	\begin{tabularx}{\linewidth}{@{\extracolsep{\fill}}cc|cccc|ccc c|c}
   		\hline
   		$S_0$ &  $\rho $ & $V_0$ & $\theta^*$  & $\kappa^*$ & $\sigma  $ & $r_0$
   		& $\alpha^*$ & $\beta^*$ &  $\eta$ &  $T$ \\
   		\hline
   		1 & -0.4  & 0.05 & ${(0.05,0.075,0.04)}^{\intercal}$ & 2  &  0.1  & 0.05 &
   		1.2 & ${(0.05,0.04,0.075)}^{\intercal}$ & 0.01 & 1 \\
   		\hline
   	\end{tabularx}
   \end{table}
   
   \noindent
   In addition, the rate matrix for the Markov chain $\bf X$ is given by
   \begin{equation*}
   Q=\left(\begin{array}{ccc} -1 & 0.1 & 0.9 \\ 0.9 & -1 & 0.1 \\ 0.5 & 0.5
   & -1 \end{array} \right).
   \end{equation*}
   
   %%%%%%%%%%%%%%%%%%%%%%%%%%%%%%%%%%%%%%%%%%%%%%%%%%%%%%%%%%%%%%%%%%%%%%%%%%%%%%
   \subsection{Validation of the pricing formula against Monte Carlo simulation}
   %%%%%%%%%%%%%%%%%%%%%%%%%%%%%%%%%%%%%%%%%%%%%%%%%%%%%%%%%%%%%%%%%%%%%%%%%%%%%%
   
   We first demonstrate the validation of formula \eqref{finalformchap5} against
   Monte Carlo simulation. Here, the sampling frequency varies from $N=1$ up
   to $N=52$, and the Monte Carlo simulation is conducted using the Euler
   discretization with 200,000 sample paths. The comparison is displayed in
   Figure 1.
   
   \begin{figure}[!htb]
   	\caption{Strike prices of variance swaps for the Heston-CIR model with
   		regime-switching and the Monte Carlo simulation.}
   	{\includegraphics[width=350pt]{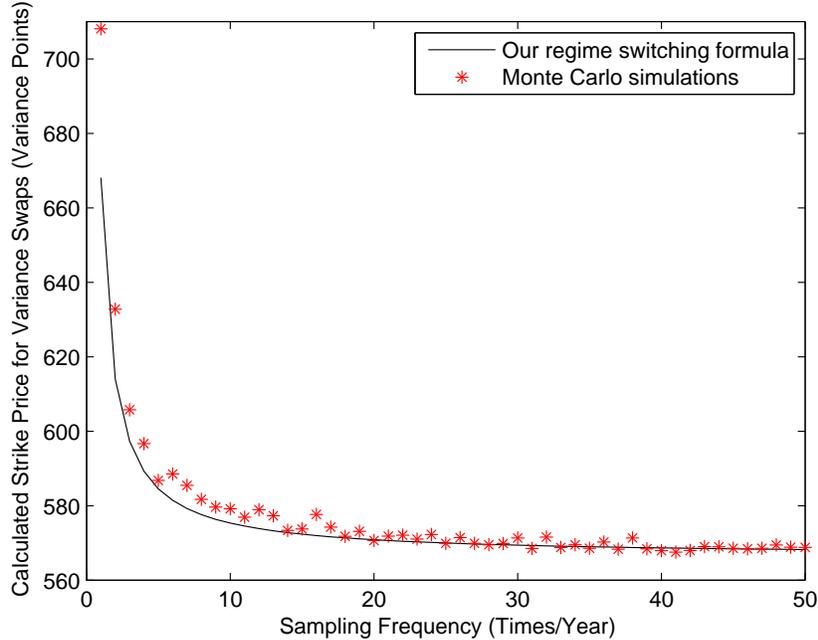}}
   \end{figure}
   
   As shown in Figure 1, our pricing formula provides a satisfactory fit to
   the simulation for $N=52$ which is the weekly sampling. In fact, the error
   calculated between our pricing formula and the simulation is less than
   0.077\% for $N=52$, and this error will be reduced as the number of sample
   paths increases. In addition, it should be emphasized that for $N=4$, the
   run time of our pricing formula is only 3.28 seconds, whereas the simulation
   takes about 8200 seconds. It is clear that our pricing formula attains almost
   the same accuracy in far less time compared to the simulation which serves
   as benchmark values.
   
   %%%%%%%%%%%%%%%%%%%%%%%%%%%%%%%%%%%%%%%%%%%%
   \subsection{Effect of regime-switching}
   %%%%%%%%%%%%%%%%%%%%%%%%%%%%%%%%%%%%%%%%%%%%
   
   In order to explore the effect of regime-switching, in Figure 2, we present
   results produced by  formula (\ref{finalformchap5}) and by the Heston-CIR
   model without regime-switching in \cite{Cao-Lian-Roslan}. For the Heston-CIR
   model without regime-switching, we fix the parameter values to be
   $\theta^*_1=0.05$ and $\beta^*_1=0.05$.
   \begin{figure}[!htb]
   	\caption{ Strike prices of variance swaps for the Heston-CIR model with and
   		without regime-switching.}
   	%\vskip -6.2cm
   	%\hskip -1.5cm
   	%{\includegraphics[width=400pt]{Figure5two}}
   	{\includegraphics[width=350pt]{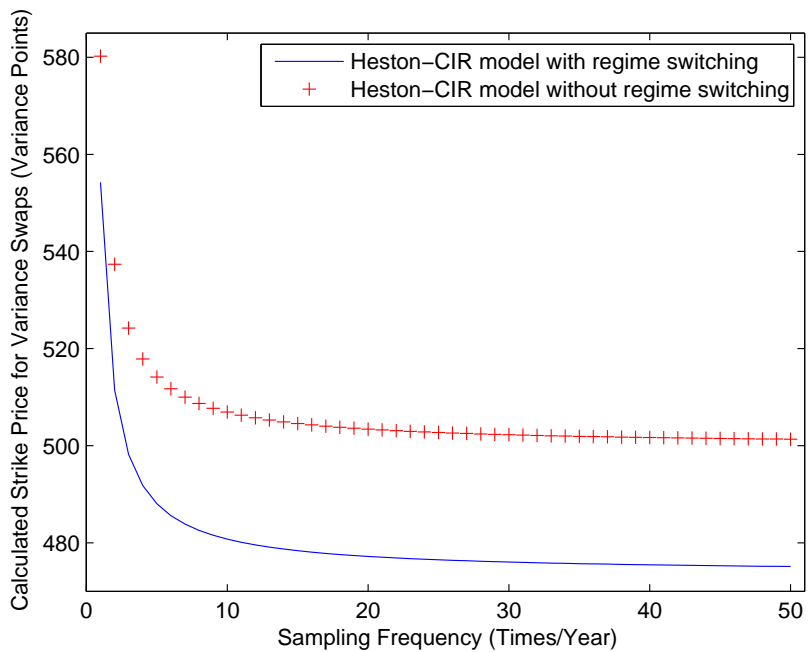}}
   	%\vskip -6.2cm
   \end{figure}
   
   We observe that the prices of variance swaps obtained from the Heston-CIR model
   with regime-switching are significantly lower than those from the corresponding
   model without regime-switching. For example, for $N=52$, the difference between
   variance swaps prices calculated from the two models is 7.32\%. This can be
   explained from the values of $\theta^*_1$ and $\beta^*_1$ which remain constant,
   whereas the values of $\theta^*$ and $\beta^*$ in the Heston-CIR model with
   regime-switching vary according to the changing states. Besides that, for the
   weekly sampling case, the difference in variance swaps prices between the two
   models becomes larger and stabilizes as the sampling frequency reaches 52. One
   possible explanation for this is the number of transitions between states in
   the Heston-CIR model with regime-switching increases as the sampling frequency
   increases.
   
   \medskip
   In addition, we also examine the economic aftermath for the prices of variance
   swaps by allowing the Heston-CIR model to switch across three regimes. In
   particular, we denote $\theta^*_1=0.05$ and $\beta^*_1=0.05$ for the contraction
   state, $\theta^*_2=0.075$ and $\beta^*_2=0.04$ for the trough state, and
   $\theta^*_3=0.04$ and $\beta^*_3=0.075$ for the expansion state, respectively.
   These values are assumed by noting that a good (resp. bad) economy is
   identified by high (resp. low) interest rate and low (resp. high) volatility.
   We provide the variance swaps pricing outcome for these three regimes in Table 2.
   \begin{table}[htb]
   	\caption{\label{compareswitching} Comparison of variance swaps prices among
   		different states in our pricing formula.}
   	\renewcommand\tabcolsep{1pt}
   	\begin{tabularx}{\linewidth}{@{\extracolsep{\fill}}c|c|c|c}
   		%\toprule
   		\hline
   		Sampling Frequency & State $\emph{Contraction}$ \quad & State $\emph{Trough}
   		\quad $ & State $\emph{Expansion} \quad $ \\
   		\hline
   		N=4  & 517.89  &  661.93 & 464.79 \\
   		N=12  & 505.74  & 648.32  & 450.21 \\
   		N=26  & 502.61  & 644.83 & 446.42 \\
   		N=52  &  501.28 & 643.37 & 444.82 \\
   		\hline
   	\end{tabularx}
   \end{table}
   
   From Table 2, we discover that the the price of a variance swap is highest in
   the trough state, followed by the contraction state, and found lowest in the
   expansion state. This trend is consistent throughout all sampling frequencies
   from $N=4$ to $N=52$. We can relate this finding to the economic condition of
   each of the states. In particular, the trough state is the state with the worst
   economy among the three, whereas the expansion state resembles the best economy.
   Thus, the price of a variance swap is cheapest in the best economy among the
   three, and most expensive in the worst economy among all. This implies that
   regime-switching has an important impact in capturing the economic changes on
   the prices of variance swaps.
   
   %%%%%%%%%%%%%%%%%%%%%%%%%
   \section{Conclusion}
   %%%%%%%%%%%%%%%%%%%%%%%%%
   
   The evaluation of variance swaps has been an active research topic in recent years. In \cite{shen2013pvs}, the continuously sampled variance swaps were
   priced under the regime switching Sch{\"o}bel-Zhu-Hull-White hybrid model. 
   However, variance swaps are  written on the realized variance  based on 
   daily closing prices in practice. To improve the pricing accuracy of these contracts, Zhu and Lian \cite{zhu-lian:11, zhu-lian:12} developed 
   closed-form pricing formulas of discretely sampled variance swaps based 
   on the framework of Heston's stochastic volatility model where the 
   interest rate followed a deterministic process. In  \cite{Cao-Lian-Roslan},  
   a hybridization of the Heston stochastic volatility model and the CIR 
   stochastic interest rate model was considered. The hybrid model extended 
   the Heston stochastic volatility model in \cite{zhu-lian:11} by modelling 
   the interest rate as the CIR process. The effect of stochastic interest
   rate on the price of discretely sampled variance swaps was demonstrated. 
   Elliott and Lian \cite{Elliott2012pva} made another extension of the 
   framework of Heston's stochastic volatility model in the direction of  
   including regime switching dynamics in the model. It was shown that
   incorporating regime switching into the Heston model had a significant 
   impact on the price of volatility swaps. 
   
   \medskip
   Since both regime switching and stochastic interest rate process affect 
   the price of variance swaps, we propose a model incorporating both 
   stochastic interest rate and regime switching effects. Specifically, 
   the proposed model combines the CIR stochastic interest rate into the Markov-modulated regime switching version of the Heston stochastic 
   volatility. Our model is capable of capturing several macroeconomic 
   issues such as alternating business cycles. In particular, we assume 
   that the long-term mean of variance of the risky stock and the 
   long-term mean of the interest rate depend on the states of the
   economy indicated by a regime-switching Markov chain. We demonstrate 
   our solution techniques and derive a semi-closed form formula for 
   pricing variance swaps. Numerical experiments reveal that our pricing 
   formula attains almost the same accuracy in far less time compared 
   with the MC simulation. To analyse the effects of incorporating
   regime-switching into pricing variance swaps, we first compare the 
   variance swaps prices calculated from the regime-switching Heston-CIR 
   model with the corresponding model without regime-switching. We find 
   that the prices of variance swaps obtained from the regime-switching 
   Heston-CIR model are significantly different from those from the 
   Heston-CIR model without regime-switching. In our case, the 
   Heston-CIR model without regime-switching corresponds to the state
   \emph{contraction}, and the price of a variance swap obtained from 
   the regime-switching Heston-CIR model is much lower than that 
   obtained from the Heston-CIR model without regime-switching. If 
   the Heston-CIR model without regime-switching corresponds to other 
   states, the conclusion can be different. Next, we explore the 
   economic consequence for the prices of variance swaps by allowing 
   the Heston-CIR model to switch across three regimes defined as 
   the best, moderate and worst economy. We notice that the price of 
   a variance swap is cheapest in the best economy among the three, 
   and most expensive in the worst economy among all. This confirms 
   the essence of incorporating regime-switching in pricing variance 
   swaps.

  %%%%%%%%%%%%%%%%%%%%%%%%%%%%


\begin{thebibliography}{99}
  	
   \bibitem{brigo2006irm}
   D. Brigo, F. Mercurio, Interest Rate Models - Theory and Practice: with Smile,
   Inflation and Credit, Springer, New York, 2006.
   
   \bibitem{carr1998ttv}
   P. Carr, D. Madan,  Towards a theory of volatility trading, in: R. Jarrow (Ed.),
   Volatility: New Estimation Techniques for Pricing Derivatives, Risk Publications,
   London, 1998,  pp. 417--427.
   
   \bibitem{Bernard-Cui:14}
   C. Bernard, Z. Cui, Prices and asymptotics for discrete variance swaps, Appl.
   Math. Finance  21(2) (2014) 140--173.
   
   \bibitem{Cao-Lian-Roslan}
   J. Cao, G. Lian, T. R. N. Roslan, Pricing variance swaps under stochastic
   volatility and stochastic interest rate, Appl. Math. Comput. 277 (2016), 72--81.
   
   \bibitem{cox1985tts}
   J. C. Cox, J. E. Ingersoll~Jr, S. A. Ross, A theory of the term structure of
   interest rates, Econometrica 53 (1985) 385-407.
   
   \bibitem{demeterfi1999mty}
   K. Demeterfi, E. Derman, M. Kamal, J. Zou, More than you ever wanted to know
   about volatility swaps, Goldman Sachs Quantitative Strategies Research Notes,
   1999.
   
   \bibitem{Duffie-Pan-Singleton:00}
   D. Duffie, J. Pan, K. Singleton, Transform analysis and asset pricing for affine
   jump--diffusions, Econometrica 68 (2000) 1343--1376.
   
   \bibitem{Elliott2012pva}
   R. J. Elliott, G. Lian, Pricing variance and volatility swaps in a stochastic
   volatility model with regime switching: discrete observations case, Quant.
   Finance 13 (2013) 687-698.
   
   \bibitem{Elliott2007pvs}
   R. J. Elliott, T. K. Siu, L. Chan, Pricing volatility swaps under Heston's
   stochastic volatility model with regime switching, Appl. Math. Finance 14 (2007)
   41--62.
   
   \bibitem{Elliott-Wilson:07}
   R. J. Elliott, C. A. Wilson, The term structure of interest rates in a hidden
   Markov setting, in: R. S. Mamon, R. J. Elliot (Eds.), Hidden Markov Models in
   Finance, Springer, 2007, pp. 15--30.
   
   \bibitem{hestoncfs1993}
   S. L. Heston, A closed-form solution for options with stochastic volatility
   with applications to bond and currency options, Rev. Financial Stud. 6 (1993)
   327-343.
   
   \bibitem{little-pant:01}
   T. Little, V. Pant, A finite-difference method for the valuation of variance
   swaps, J. Comput. Finance 5 (2001) 81--101.
   
   \bibitem{palmowski-rolski:2002}
   Z. Palmowski, T. Rolski, A technique for exponential change of measure for
   Markov processes, Bernoulli 8 (2002) 767-785.
   
   \bibitem{papa2014ars}
   A. Papanicolaou, R. Sircar, A regime-switching Heston model for VIX and S\&P
   500 implied volatilities, Quant. Finance 14 (2014) 1811--1827.
   
   
   \bibitem{Rujivan-Zhu:12}
   S. Rujivan, S. Zhu, A simplified analytical approach for pricing
   discretely-sampled variance swaps with stochastic volatility, Appl. Math. Lett.
   25 (11) (2012) 1644--1650.
   
   \bibitem{shen2013pvs}
   Y. Shen, T. K. Siu, Pricing variance swaps under a stochastic interest rate
   and volatility model with regime-switching, Oper. Res. Lett. 41 (2013) 180-187.
   
   \bibitem{Siu:10}
   T. K. Siu, Bond pricing under a Markovian regime-switching jump-augmented
   Vasicek model via stochastic flows, Appl. Math. Comput. 216 (2010) 3184-3190.
   
   \bibitem{zhu-lian:11}
   S. Zhu, G. Lian, A closed-form exact solution for pricing variance swaps 
   with stochastic volatility, Math. Finance 21 (2011) 233--256.
   
   \bibitem{zhu-lian:12}
   S. Zhu, G. Lian, On the valuation of variance swaps with stochastic volatility,
   Appl. Math. Comput. 219 (2012) 1654--1669.
   
  
  \end{thebibliography}
  \end{document}